\theoremstyle{plain}
\newtheorem{theo}{Theorem}
\newtheorem{cor}[theo]{Corollary}
\newtheorem{lem}[theo]{Lemma}
\theoremstyle{definition}
\newtheorem{defi}[theo]{Definition}
\begin{document}

\title{Kolmogorov-Sinai entropy from the ordinal viewpoint}

\author{Karsten Keller, Mathieu Sinn\\
Institute of Mathematics, University of L\"ubeck}

{\large
Kolmogorov-Sinai entropy from the ordinal viewpoint} \\
\\
Karsten Keller (corresponding author) and Mathieu Sinn \\
Institute of Mathematics \\ University of L\"ubeck \\ Wallstra{\ss}e 40\\
23560 L\"ubeck,
Germany\\ \\
E-mail: keller@math.uni-luebeck.de

\newpage

\maketitle

\begin{abstract}
\noindent In the case of ergodicity much of the structure of a one-dimensional time-discrete dynamical system is already determined by its ordinal
structure. We generally discuss this phenomenon by considering the distribution of ordinal patterns, which describe the up and down in the orbits of
a Borel measurable map on a subset of the real numbers. In particular, we give a natural ordinal description of Kolmogorov-Sinai entropy of a large
class of one-dimensional dynamical systems and relate Kolmogorov-Sinai entropy to the permutation entropy recently introduced by Bandt and Pompe.
\end{abstract}

\textbf{Keywords:} time-discrete dynamical system, Kolmogorov-Sinai entropy, permutation entropy.

\section{Introduction}\label{intro}
The quantification of complexity is an important topic in the theory and application of dynamical systems. Various complexity measures have been
introduced and studied, like the correlation dimension given by Grassberger and Procaccia \cite{grass83} and the Kolmogorov-Sinai entropy (see
\cite{Walters82}). Both are well-motivated quantities, but they are not easy to estimate from real data. Recently, Bandt and Pompe
\cite{BandtPompe02} have proposed to measure complexity of one-dimensional dynamical systems on the base of the distribution of ordinal patterns in
the system.

Given a one-dimensional time-discrete dynamical system, defined by a map on a subset of the real line, ordinal patterns of order $d$ classify,
roughly speaking, points according to the order type of the vectors consisting of the points and of their first $d$ iterates. The higher $d$ is, the
more information is contained in the patterns. When the dynamical system is equipped with a probability measure which is invariant with respect to
the given map, this measure yields a distribution on the finite set of ordinal patterns of some order. The permutation entropy introduced by Bandt
and Pompe \cite{BandtPompe02} is the Shannon entropy of the ordinal pattern distribution of order $d$ relative to $d$ for $d\to\infty$ and provides a
relatively simple concept for measuring complexity in a `standardized' way. It is remarkable that in the case of a piecewise monotone interval map
the permutation entropy coincides with the Kolmogorov-Sinai entropy (see Bandt et al.~\cite{bkp}). Also note that a similar result has been given for
the topological entropy (see Bandt et al.~\cite{bkp} and, for some further results, Misiurewicz \cite{Misiu03}).

From the viewpoint of estimating complexity, ergodic dynamical systems are most important because according to the Birkhoff Ergodic Theorem many
properties of such systems can be reconstructed from a single orbit with probability one. For interval maps and their multidimensional
generalizations, Amig\'{o} et al.~\cite{Amigo05} show equality of a modification of the permutation entropy to the Kolmogorov-Sinai entropy. Their
concept of permutation entropy uses a non-standardized sequence of finer and finer partitions of the considered interval as the Kolmogorov-Sinai
entropy itself.

In the following we discuss consequences of invariancy and ergodicity in a one-dimensional dynamical system for the ordinal structure of the system.
On this base, we give an ordinal description of the Kolmogorov-Sinai entropy for ergodic systems and continuous systems on compact sets and
investigate the relationship of permutation entropy and Kolmo\-go\-rov-Sinai entropy.

\paragraph{Ordinal patterns.} The central concept of this paper is the concept of an ordinal pattern.
We give it here in a slightly altered form which is more convenient for the following considerations.

\begin{defi}\label{odef}
For $d\in {\mathbb N}=\{1,2,3,\ldots \}$, we call ${\cal I}_d:=\vartimes_{l=1}^d\{0,1,\ldots ,l\}$ the {\em set of ordinal patterns of order} $d$.
Further, for $X\subset {\mathbb R}$ and a map $f: X\hookleftarrow$ the sequence ${\bf i}_d={\bf i}_d(x)=(i_{d,1}(x),i_{d,2}(x),\ldots ,i_{d,d}(x))\in
{\cal I}_d$ defined by
\begin{eqnarray}\label{defform}
i_{d,l}(x)=\#\{r\in \{0,1,\ldots ,l-1\}\,|\,f^{\circ (d-r)}(x)\leq f^{\circ(d-l)}(x)\}
\end{eqnarray}
for $l=1,2,\ldots,d$ is called the {\em ordinal pattern of order} $d\in {\mathbb N}$ {\em realized by} $x\in X$. Here $f^{\circ n}(x)$ for
$n\in{\mathbb N}$ denotes the $n$-th iterate of a point $x\in X$ with respect to $f$ where $f^{\circ 0}(x)=x$.
\end{defi}

The vector ${\bf i}_d(x)$ of inversion numbers just describes the order relations between the components of the vector $(x, f(x), f^{\circ 2}(x),
\ldots, f^{\circ d}(x))$. Note that originally ordinal patterns were given in terms of permutations which express the order relations more directly
than the inversion numbers (see \cite{BandtPompe02}). However, the latter representation is better from the computational viewpoint (see \cite{kes}),
and the main point is that both descriptions of ordinal patterns provide exactly the same partitions of the interested system. In particular, both
representations contain the information on whether $f^{\circ k}(x)\leq f^{\circ l}(x)$ for $k,l\in\{0,1,\ldots,d\}$ with $l<k$.

\paragraph{Partitions and entropies.} Let $X\subset {\mathbb R}$, let $f: X\hookleftarrow$
be a ${\mathbb B}(X)$-${\mathbb B}(X)$-measurable map, where ${\mathbb B}(X)$ denotes the Borel $\sigma$-algebra on $X$, and let $\mu$ be an {\em
$f$-invariant} probability measure on ${\mathbb B}(X)$, that is
\begin{eqnarray*}
\mu(f^{-1}(A)) = \mu(A)
\end{eqnarray*}
for each $A\in{\mathbb B}(X)$. For the following we refer to the simple fact that finite $\sigma$-algebras on $X$ are in one-to-one relation to
finite partitions of $X$, since there is exactly one finite partition generating a finite $\sigma$-algebra.

Recall the definition of the {\em (Shannon) entropy} of a finite partition ${\cal A}=\{A_1,A_2,\linebreak\ldots ,A_k\}\subset {\mathbb B}(X)$ of $X$
and of the from ${\cal A}$ generated $\sigma$-algebra ${\cal B}$, respectively, given by
\begin{eqnarray*}
H({\cal B})=H({\cal A})= - \sum_{j=1}^k \mu (A_j)\ln\mu (A_j).
\end{eqnarray*}
For a finite partition ${\cal A}\subset {\mathbb B}(X)$ of $X$, let $h_\mu (f,{\cal A}):=\lim_{n\to\infty} \frac{1}{n}H(\bigvee_{j=0}^{n-1} f^{-\circ
j}({\cal A}))$, where $\bigvee_{j=0}^{n-1} f^{-\circ j}({\cal A})$ denotes the $\sigma$-algebra generated by the set of all $f^{-\circ j}(A)$ with
$j\in \{0,1,\ldots ,n-1\}$ and $A\in {\cal A}$. It is well known that this limit exists (see, e.g.,~\cite{Walters82}, Cor.~4.9.1). The {\em
Kolmogorov-Sinai entropy} of $f$ is defined by
\begin{eqnarray*}
h_\mu (f):=\sup\{h_\mu (f,{\cal A})\,|\,{\cal A}\subset {\mathbb B}(X)\mbox{ finite partition of }X\}.
\end{eqnarray*}

The Kolmogorov-Sinai entropy is an important theoretical concept in ergodic theory, however, its estimation from an orbit of a dynamical system, is
complicated since principally arbitrarily fine partitions have to be considered. The permutation entropy introduced by Bandt and Pompe
\cite{BandtPompe02} seems to be an interesting alternative.

\begin{defi}
Let $X\subset {\mathbb R}$, let $f: X\hookleftarrow$ be a ${\mathbb B}(X)$-${\mathbb B}(X)$-measurable map, and let $\mu$ be an $f$-invariant
probability measure on ${\mathbb B}(X)$. For $d\in {\mathbb N}$, let
\begin{eqnarray*}
{\cal P}_d=\{P_{\bf i}\,|\,{\bf i}\in {\cal I}_d\}\mbox{ with }P_{\bf i}=\{x\in X\,|\,{\bf i}_d(x)={\bf i}\}.
\end{eqnarray*}
The {\em permutation entropy} of $f$ is defined by
\begin{eqnarray*}
h^\ast_\mu (f)=\limsup_{d\to\infty}\frac{1}{d}\,H({\cal P}_d).
\end{eqnarray*}
\end{defi}
Note that for each $d\in {\mathbb N}$ and ${\bf i}\in {\cal I}_d$, the set $P_{\bf i}$ indeed belongs to ${\mathbb B}(X)$. This immediately follows
from the facts that $\{x\in X\,|\,f^{\circ (d-r)}(x)\leq f^{\circ(d-l)}(x)\}= (f^{\circ(d-l)}-f^{\circ (d-r)})^{-1}(X\cap[0,\infty[)$, the mapping
$f^{\circ(d-l)}-f^{\circ (d-r)}$ is ${\mathbb B}(X)$-${\mathbb B}(X)$-measurable and ${X\cap[0,\infty[}\in{\mathbb B}(X)$ (compare formula
(\ref{defform})).

\section{Generating properties}

The key for proving the main results of the paper is that in the ergodic case ordinal patterns separate points, in other words, the partitions
defined by ordinal pattern are `generating'. Note that finite generating partitions play an important role in ergodic theory (see \cite{Walters82},
\S 4.6). By considering ordinal patterns of increasing order we obtain a sequence of finite partitions ${\cal P}_1,{\cal P}_2,{\cal P}_3,\ldots$
generating increasing $\sigma$-algebras, that is $\sigma({\cal P}_1)\subset\sigma({\cal P}_2)\subset\sigma({\cal P}_3)\subset\ldots$.

The following lemma shows how ordinal information of a dynamical system with invariant measure can be extracted from its orbits. Recall that if
$X\subset {\mathbb R}$ and $f: X\hookleftarrow$ is ${\mathbb B}(X)$-${\mathbb B}(X)$-measurable, then an $f$-invariant probability measure $\mu$ on
${\mathbb B}(X)$ is called {\em ergodic} if
\begin{eqnarray*}
\mu(A)\in\{0,1\}\mbox{ for each } A\in{\mathbb B}(X)\mbox{ with }f^{-1}(A)=A.
\end{eqnarray*}
It is well-known that this is equivalent to the statement that $\mu(f^{-1}(B)\,\Delta\, B)=0$ for $B\in{\mathbb B}(X)$ implies $\mu(B)\in\{0,1\}$,
where $\Delta$ denotes the symmetric difference of sets (see \cite{Walters82}, Th.~1.5.).

\begin{lem}\label{mainlem}
Let $X\subset {\mathbb R}$, let $f: X\hookleftarrow$ be a ${\mathbb B}(X)$-${\mathbb B}(X)$-measurable map, and let $\mu$ be an $f$-invariant
probability measure on ${\mathbb B}(X)$.

Then there exists a set $\widetilde{X}\in {\mathbb B}(X)$ with $\mu (\widetilde{X})=1$ and $f(\widetilde{X})\subset\widetilde{X}$, such that
\begin{eqnarray*}
\bm{\alpha}(x):=\lim_{d\to\infty}\frac{i_{d,d}(x)}{d}
\end{eqnarray*}
is well-defined for all $x\in \widetilde{X}$. If $\mu$ is ergodic, $\widetilde{X}$ can be chosen such that moreover $\bm{\alpha}(x)=\mu(\{y\in
\widetilde{X}\,|\,y\leq x\})$
and $\bm{\alpha}$ is injective on $\widetilde{X}$.
\end{lem}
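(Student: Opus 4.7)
Substituting $s = d - r$ in (\ref{defform}) exhibits
\begin{equation*}
\frac{i_{d,d}(x)}{d} \;=\; \frac{1}{d}\sum_{s=1}^{d} \mathbf{1}_{(-\infty,\,x]\cap X}(f^{\circ s}(x))
\end{equation*}
as a Birkhoff-like ergodic average; the sole obstacle to a direct appeal to Birkhoff's theorem is that the test function $\mathbf{1}_{(-\infty,\,x]\cap X}$ itself depends on the base point $x$. My plan is to apply Birkhoff's ergodic theorem to the countable family $\chi_q := \mathbf{1}_{(-\infty,q]\cap X}$, with $q$ running over $\mathbb{Q}$ together with the at most countable set $\mathcal{A}$ of atoms of $\mu$. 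A countable intersection of the resulting full-measure sets produces a $\mu$-full-measure $X_0 \subset X$ on which all these averages converge; in the ergodic case their limits are the constants $F(q) := \mu((-\infty,q]\cap X)$.

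For $x \in X_0$ that is not an atom of $\mu$, the pointwise ordering $\chi_{q_1} \le \mathbf{1}_{(-\infty,\,x]} \le \chi_{q_2}$ for rationals $q_1 < x \le q_2$ delivers the sandwich
\begin{equation*}
F(q_1) \;\le\; \liminf_{d\to\infty}\frac{i_{d,d}(x)}{d} \;\le\; \limsup_{d\to\infty}\frac{i_{d,d}(x)}{d} \;\le\; F(q_2),
\end{equation*}
and continuity of the monotone $F$ at non-atom points lets me squeeze $q_1 \uparrow x$, $q_2 \downarrow x$ through the rationals to conclude $\bm{\alpha}(x) = F(x)$. For $x$ itself an atom of $\mu$, the test function $\chi_x$ already belongs to the Birkhoff family, and since an atom sits inside every $\mu$-full-measure set, the Birkhoff limit for $\chi_x$ is available at the point $x$ itself, yielding $\bm{\alpha}(x) = F(x)$ directly. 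So on a $\mu$-full-measure set $X_1$ the limit $\bm{\alpha}$ exists and equals $F$; forward invariance is enforced by replacing $X_1$ with $\bigcap_{n \ge 0} f^{-\circ n}(X_1)$, which remains full measure because $\mu$ is $f$-invariant.

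For the ergodic addenda I split on two cases. If $\mu$ is purely atomic, ergodicity forces it to be uniformly distributed on a single finite periodic orbit, on which $F$ is visibly injective --- take $\widetilde{X}$ to be this orbit. If $\mu$ is nonatomic, $F$ is continuous, its maximal constancy intervals (plateaus) form a countable family $\{P_i\}$, and removing $\bigcup_i P_i$ together with the right endpoint of each $P_i$ gives a $\mu$-full-measure set on which $F$ is strictly increasing; a further pass through $\bigcap_{n \ge 0} f^{-\circ n}$ restores forward invariance. In either case $\bm{\alpha}(x) = F(x) = \mu(\{y \in \widetilde{X}\,|\, y \le x\})$ follows from $\mu(\widetilde{X}) = 1$. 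The non-ergodic claim (mere existence of $\bm{\alpha}$ a.e.) I would obtain via the ergodic decomposition $\mu = \int \mu_\theta\, d\nu(\theta)$, applying the ergodic case component-wise and gluing the resulting full-measure sets measurably, then forward-invariantising as before. The steps I expect to be most delicate are the atom handling in the sandwich --- specifically, the observation that the Birkhoff average of $\chi_a$ can actually be read off at the atom $a$ itself --- and checking that the removal of plateau endpoints in the injectivity step stays consistent with iterated preimages under $f$, which is precisely what the final $\bigcap_{n \ge 0} f^{-\circ n}$ intersection secures.
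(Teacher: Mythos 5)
Your proposal is correct in its essential mechanics and, for the ergodic case, is arguably cleaner than the paper's argument. Both proofs rest on the same core idea: rewrite $i_{d,d}(x)/d$ as a Birkhoff average of an indicator of a half-line anchored at the base point $x$, apply the ergodic theorem to a countable family of fixed half-line indicators, and close the gap by a monotone squeeze. The differences are in how the squeeze is closed and in how much generality it buys. You squeeze against the distribution function $F$, using its continuity at non-atoms and the neat observation that an atom automatically lies in every full-measure Birkhoff set, so the limit can be read off at the atom itself; your choice of rationals plus atoms also spares you the paper's bookkeeping with a dense subset $Z$ of $X$ augmented by one-sidedly isolated points. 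The paper instead squeezes against the limiting \emph{empirical} measure $P_x$ of the orbit of $x$, with the correction term $P_x(\{x\})$ accounting for exact returns of the orbit to $x$; this is heavier, but it proves the existence of $\bm{\alpha}$ directly for an \emph{arbitrary} invariant $\mu$, not just an ergodic one. Your injectivity argument (purely atomic ergodic measures are uniform on a periodic cycle, where $F$ is trivially injective; otherwise delete the countably many null plateaus of $F$ and re-intersect with $\bigcap_{n}f^{-\circ n}$) reaches the same conclusion as the paper's analysis of the level sets $J=\bm{\alpha}^{-1}(\alpha)$, where $\min J$ is retained when $\mu(J)>0$.

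The one place where your plan is genuinely weaker is the non-ergodic existence claim. You delegate it to the ergodic decomposition, but the lemma is stated for an arbitrary $X\subset{\mathbb R}$ and a merely Borel-measurable $f$, a setting in which the existence of an ergodic decomposition is not automatic (it requires the system to be a standard Borel or Lebesgue system). Granted, once a decomposition $\mu=\int\mu_\theta\,d\nu(\theta)$ is available the rest of your gluing is unproblematic, since the set where $\lim_d i_{d,d}(x)/d$ fails to exist is a single Borel set that is $\mu_\theta$-null for a.e.\ $\theta$; and the paper itself invokes ergodic decomposition elsewhere under comparable hypotheses. Still, the paper's $P_x$-based argument settles the general case without any such structural assumption, and you should either adopt it or add the standardness hypothesis explicitly if you want the lemma in full generality.
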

\begin{proof}
Let $z\in X$, and for $x\in X$ let
\begin{eqnarray*}
\bm{\alpha}(z,x):= \lim_{d\to\infty} \frac{\#\{i\in\{1,2,\ldots ,d\}\,|\,f^{\circ i}(x)\leq z\}}{d}
\end{eqnarray*}
if the limit exists. Fix a set $X_z\in {\mathbb B}(X)$ with $\mu (X_z)=1$ consisting of points $x\in X$ for which $\bm{\alpha}(z,x)$ exists.
Moreover, in case of ergodicity of $f$ require that $\mu (\{y\in X\,|\,y\leq z\})=\bm{\alpha}(z,x)$ for all $x\in X_z$. Both is possible by the
Birkhoff Ergodic Theorem (see, e.g.,~\cite{Walters82}, Th.~1.14.~and text below). For $\widetilde{X}_z:=\bigcap_{n\in {\mathbb N}}f^{-\circ n}(X_z)$
it holds $f(\widetilde{X}_z)\subset\widetilde{X}_z$ and, since $\mu$ is $f$-invariant, $\mu (\widetilde{X}_z)=1$. Moreover, since $\bm{\alpha}(z,x)$
is defined if and only if $\bm{\alpha}(z,f(x))$ is, $\bm{\alpha}(z,x)$ is defined for all $x\in\widetilde{X}_z$.

Let now $Z$ be a countable dense subset of $X$ containing all points of $X$ isolated from the left or from the right, that is, every $z\in X$ with
$]z-\epsilon,z[\ \cap\ X=\emptyset$ or $]z,z+\epsilon[\ \cap\ X=\emptyset$ for some $\epsilon>0$. Let $\widetilde{X}=\bigcap_{z\in
Z}\widetilde{X}_z$. Obviously, $\mu(\widetilde{X})=1$, and $f(\widetilde{X})\subset\widetilde{X}$. We show that $\bm{\alpha}(x,x)$ is defined for
$x\in\widetilde{X}$. If $x\in \widetilde{X} \cap Z$, this is obvious. For the case where $x\in \widetilde{X}\setminus Z$, consider
\begin{eqnarray*}
{\cal E} = \{\,]-\infty,z] \,|\, z\in Z \}
\end{eqnarray*}
and let $\sigma({\cal E})$ denote the $\sigma$-algebra on ${\mathbb R}$ generated by ${\cal E}$. Obviously,
\begin{eqnarray*}
P_x(B) := \lim_{d\to\infty} \frac{\#\{i\in\{1,2,\ldots,d\}\,|\,f^{\circ i}(x) \in B\}}{d}
\end{eqnarray*}
is well-defined for $B\in{\cal E}$ and can uniquely be extended to a probability measure on the measurable space $({\mathbb R}, \sigma({\cal E}))$,
which we still denote by $P_x$. Now let $(y_k)_{k\in {\mathbb N}}$ be a monotonically increasing sequence in $Z$ and $(z_k)_{k\in {\mathbb N}}$ be a
monotonically decreasing sequence in $Z$ such that $\lim_{k\to\infty}y_k=\lim_{k\to\infty}z_k=x$. (By definition of $Z$ such sequences always exist.)
By definition of $\widetilde{X}$, the numbers $\bm{\alpha}(y_k,x), \bm{\alpha}(z_k,x) $ are defined for all $k\in {\mathbb N}$. Obviously, the
sequences $( \bm{\alpha}(y_k,x) )_{k\in {\mathbb N} }$ and $( \bm{\alpha}(z_k,x) )_{k\in {\mathbb N} }$ are monotonically nonincreasing and
monotonically nondecreasing, respectively, therefore the limits $ \lim_{k\to\infty}\bm{\alpha}(y_k,x)$ and $\lim_{k\to\infty}\bm{\alpha}(z_k,x) $
both exist. Further, $P_x(\{x\}) = \lim_{d\to\infty} \frac{\#\{i\in\{1,2,\ldots,d\}\,|\,f^{\circ i}(x) = x\}}{d}$ is obviously well-defined. In
particular, $P_x(\{x\}) = \frac{1}{m}$ if $m$ is the least integer such that $f^{\circ m}(x) = x$ and $P_x(\{x\}) = 0$ if no such integer exists.
Hence
\begin{eqnarray*}
\bm{\alpha}(y_k,x) + P_x(\{x\}) &\leq&
\liminf_{d\to\infty} \frac{\#\{i\in\{1,2,\ldots,d\}\,|\,f^{\circ i}(x)\leq x\}}{d} \\
&\leq& \limsup_{d\to\infty} \frac{\#\{i\in\{1,2,\ldots,d\}\,|\,f^{\circ i}(x)\leq x\}}{d} \ \leq \ \bm{\alpha}(z_k,x)
\end{eqnarray*}
for every $k\in{\mathbb N}$. By the upper continuity of finite measures one obtains
\begin{eqnarray*}
\lim_{k\to\infty}\Big(\bm{\alpha}(z_k,x) - (\bm{\alpha}(y_k,x) + P_x(\{x\}))\Big) = \lim_{k\to\infty} P_x(]y_k,z_k]) - P_x(\{x\}) = 0.
\end{eqnarray*}
Consequently, $\lim_{d\to\infty} \frac{\#\{i\in\{1,2,\ldots,d\}\,|\,f^{\circ i}(x)\leq x\}}{d}$ exists, and hence $\bm{\alpha}(x) = \bm{\alpha}(x,x)
$ is defined for each $x\in\widetilde{X}$ and coincides with $\lim_{k\to\infty}\mu (\{y\in X\,|\,y\leq z_k\})=\mu (\{y\in X\,|\,y\leq x\})$ for $\mu$
ergodic.

In particular, $\bm{\alpha}$ is monotonically nondecreasing in the ergodic case. Therefore, if $J=\bm{\alpha}^{-1}(\alpha)$ consists of more than one
point for some $\alpha\in {\mathbb R}$, then $J$ is the intersection of $\widetilde{X}$ with some interval, hence there are at most countable many of
such $\alpha$. In the case $\mu (J)>0$, necessarily, $\min J$ exists and $\mu (J)=\mu(\{\min J\})$. Otherwise, one would find some $x,y\in J$ with
$x<y$ such that $\mu(\,]-\infty,x]\,\cap\,J) < \mu(\,]-\infty,y]\,\cap\,J)$, leading to
\begin{eqnarray*}
\bm{\alpha}(x)=\mu(\,]-\infty,x]\,\cap\,X)<\mu(\,]-\infty,y]\,\cap\,X) =\bm{\alpha}(y)
\end{eqnarray*}
in contradiction to $\bm{\alpha}$ being constant on $J$. Let
\begin{eqnarray*}
U&=&\bigcup\,\{J\,|\,J=\bm{\alpha}^{-1}(\alpha)\mbox{ for some }\alpha\in {\mathbb R},\,\#J>1,\mu
(J)=0\}\\
&\quad&\cup\ \bigcup\,\{J\setminus\{\min J\}\,|\,J=\bm{\alpha}^{-1}(\alpha)\mbox{ for some }\alpha\in {\mathbb R},\,\#J>1,\mu (J)>0\}
\end{eqnarray*}
Clearly, $\mu (U)=0$. By omitting all points with iterates in $U$ the set $\widetilde{X}$ can be modified such that $\bm{\alpha}$ is injective.
\end{proof}

\noindent {\it Remarks:}\vspace{1mm}\\
1. The proof of the lemma is given under relatively general assumptions. If $\mu$ is ergodic, then ${\bm \alpha}$ is just the distribution function
of $\mu$, which can easily seen by the `ergodic part' of the Birkhoff Ergodic Theorem (see, e.g.,~the remark below Th.~1.14.~in \cite{Walters82}) and
the Clivenko-Cantelli argument (see, e.g.,~\cite{Pollard84}, sec.~II.2.). It remains to show injectivity of ${\bm \alpha}$ on a set of full measure,
which is the last part of our proof. The case that $X$ is compact and $f$ is continuous can reduced to the ergodic case argueing with ergodic
decomposition (see
e.g.~\cite{Walters82}, \S 6.2.)\vspace{1mm}\\
2. Bandt and Shiha \cite{BandtShiha07} have used the above statement for ergodic $\mu$ with con\-tinuous distribution function, in order to show that
all finite-dimensional distributions of a stationary stochastic process can be reconstructed from its ordinal structure and the one-dimensional
distribution.\vspace{3mm}

As the above lemma shows, ergodic case ordinal patterns are `generating' in the ergodic case:

\begin{cor}
Let $X\subset {\mathbb R}$, let $f: X\hookleftarrow$ be a ${\mathbb B}(X)$-${\mathbb B}(X)$-measurable map, and let $\mu$ be an $f$-invariant ergodic
probability measure on ${\mathbb B}(X)$.

Then there exists a set $\widetilde{X}\subset X$ with $\mu (\widetilde{X})=1$ and $f(\widetilde{X})\subset \widetilde{X}$, such that different points
of $\widetilde{X}$ have different ordinal patterns of some order $d$.
\end{cor}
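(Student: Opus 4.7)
The plan is to reduce the corollary directly to the ergodic part of Lemma~\ref{mainlem}. That lemma furnishes a set $\widetilde{X}\in{\mathbb B}(X)$ with $\mu(\widetilde{X})=1$ and $f(\widetilde{X})\subset\widetilde{X}$ on which the limit
\[
\bm{\alpha}(x)=\lim_{d\to\infty}\frac{i_{d,d}(x)}{d}
\]
exists and, crucially, is \emph{injective}. So the only work left is to translate injectivity of $\bm{\alpha}$ into a statement about the ordinal patterns ${\bf i}_d$ themselves.

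Next, I would pick two distinct points $x,y\in\widetilde{X}$. By injectivity we have $\bm{\alpha}(x)\neq\bm{\alpha}(y)$. Since both numbers are by definition the limits of $i_{d,d}(x)/d$ and $i_{d,d}(y)/d$, there must be some $d\in{\mathbb N}$ (depending on $x,y$) for which $i_{d,d}(x)\neq i_{d,d}(y)$; otherwise the two sequences of rationals would be identical and hence have the same limit. Because $i_{d,d}$ is simply the $d$-th coordinate of the ordinal pattern ${\bf i}_d$, it follows at once that ${\bf i}_d(x)\neq{\bf i}_d(y)$, which is exactly what the corollary asserts.

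The main (and rather mild) obstacle is merely bookkeeping: one needs the order $d$ separating $x$ from $y$ to be \emph{finite}, which is guaranteed by the strict inequality of the two limits $\bm{\alpha}(x),\bm{\alpha}(y)$. No further ergodic-theoretic work is required beyond what is already built into Lemma~\ref{mainlem}; in particular the forward invariance $f(\widetilde{X})\subset\widetilde{X}$ and the full measure $\mu(\widetilde{X})=1$ are inherited directly from the lemma, so $\widetilde{X}$ as produced there can be used verbatim.
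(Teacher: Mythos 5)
Your proposal is correct and is exactly the derivation the paper intends: the paper offers no separate proof of the corollary, presenting it as an immediate consequence of Lemma~\ref{mainlem}, and your argument (injectivity of $\bm{\alpha}$ forces the sequences $i_{d,d}(x)/d$ and $i_{d,d}(y)/d$ to differ at some finite $d$, hence the ordinal patterns of order $d$ differ) is the natural way to make that step explicit. The identification of $i_{d,d}(x)$ with $\#\{i\in\{1,\ldots,d\}\,|\,f^{\circ i}(x)\leq x\}$ and the inheritance of $\mu(\widetilde{X})=1$ and $f(\widetilde{X})\subset\widetilde{X}$ from the lemma are all handled correctly.
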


We will need the supplement to Lemma \ref{mainlem} below, where $\bigvee_{d\in {\mathbb N}} {\cal P}_d$ denotes the $\sigma$-algebra on $X$ generated
by $\bigcup_{d\in {\mathbb N}} {\cal P}_d$.

\begin{lem}\label{suppl}
Let $\widetilde{X}$ be defined as in Lemma \ref{mainlem}, ${\bm \alpha}$ be considered as a map on $\widetilde{X}$, and $\widetilde{\cal P}$ denote
the restriction of $\bigvee_{d\in {\mathbb N}} {\cal P}_d$ to $\widetilde{X}$. Then ${\bm \alpha}$ is $\widetilde{\cal P}$-${\mathbb
B}([0,1])$-measurable.
\end{lem}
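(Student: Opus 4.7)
The plan is to exhibit $\bm{\alpha}$ as a pointwise limit of simple $\widetilde{\cal P}$-measurable functions and then invoke the standard fact that pointwise limits of measurable functions are measurable.

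The crucial observation is that, for each fixed $d\in{\mathbb N}$, the number $i_{d,d}(x)$ is, by definition, the last coordinate of the ordinal pattern ${\bf i}_d(x)$. In particular, $i_{d,d}$ takes a constant value on each cell $P_{\bf i}$ of the partition ${\cal P}_d$, so the function
\begin{eqnarray*}
g_d:\widetilde{X}\to[0,1],\quad g_d(x):=\frac{i_{d,d}(x)}{d}
\end{eqnarray*}
is measurable with respect to the trace of $\sigma({\cal P}_d)$ on $\widetilde{X}$, and therefore with respect to $\widetilde{\cal P}$, which by definition contains all these traces.

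By Lemma \ref{mainlem}, $g_d(x)\to\bm{\alpha}(x)$ for every $x\in\widetilde{X}$. The pointwise limit of a sequence of $\widetilde{\cal P}$-measurable functions into ${\mathbb R}$ (equivalently, with values in $[0,1]$) is itself $\widetilde{\cal P}$-${\mathbb B}([0,1])$-measurable, because for every $a\in{\mathbb R}$ one has
\begin{eqnarray*}
\{x\in\widetilde{X}\,|\,\bm{\alpha}(x)\leq a\}=\bigcap_{k\in{\mathbb N}}\bigcup_{n\in{\mathbb N}}\bigcap_{d\geq n}\{x\in\widetilde{X}\,|\,g_d(x)\leq a+\tfrac{1}{k}\},
\end{eqnarray*}
which belongs to $\widetilde{\cal P}$. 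This establishes the claim.

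There is no real obstacle here: the content of the lemma is almost entirely absorbed by the definition of the ${\cal P}_d$ and by Lemma \ref{mainlem}. The only point one has to be a little careful about is that the trace of $\sigma({\cal P}_d)$ on $\widetilde{X}$ indeed sits inside $\widetilde{\cal P}$, which is clear since $\widetilde{\cal P}$ is the trace of $\bigvee_{d\in{\mathbb N}}{\cal P}_d\supset\sigma({\cal P}_d)$ on $\widetilde{X}$.
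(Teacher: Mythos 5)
Your proof is correct and follows essentially the same route as the paper: the paper likewise writes $\{x\in\widetilde{X}\,|\,\bm{\alpha}(x)\leq\alpha\}$ as a countable intersection--union--intersection of sets of the form $\{x\,|\,i_{d,d}(x)<dq\}$, each a union of cells of ${\cal P}_d$, and then uses that the sets $]0,\alpha]$ generate ${\mathbb B}([0,1])$. Your packaging of this as ``$\bm{\alpha}$ is a pointwise limit of the $\sigma({\cal P}_d)$-simple functions $i_{d,d}/d$'' is just a cleaner phrasing of the identical argument.
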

\begin{proof}
For $\alpha\,\in\,[0,1]$ and $\widetilde{X}_\alpha:=\{x\in \widetilde{X}\,|\, \bm{\alpha}(x) \leq \alpha \}$ it holds
\begin{eqnarray*}
\widetilde{X}_\alpha &=& \{x\in \widetilde{X}\,|\,\mbox{For all }q\in {\mathbb Q}\,\cap\, ]x_0,\infty [\mbox{ there exists a } d_0\in {\mathbb
N}\\& &\hspace{5cm}\mbox{ with }\frac{i_{d,d}(x)}{d}<q\mbox{ for all }d\geq d_0\}\\
&=& \left ( \bigcap_{q\,\in\, {\mathbb Q}\,\cap\, ]x_0,\infty [}\ \bigcup_{d_0\in {\mathbb N}}\ \bigcap_{d\,\geq\, d_0} \{x\in
X\,|\,i_{d,d}(x)<d\hspace{0.15mm}q\}\right )\cap\widetilde{X}.
\end{eqnarray*}
Since for each $d\in {\mathbb N}, q\in{\mathbb Q}\,$, the set $\{x\in X\,|\,i_{d,d}(x)<d\hspace{0.15mm}q\}$ belongs to the $\sigma$-algebra generated
by ${\cal P}_d$, the set $\bigcap_{q\,\in\, {\mathbb Q}\,\cap\, ]x_0,\infty [}\bigcup_{d_0\in {\mathbb N}} \bigcap_{d\,\geq\, d_0}\{x\in
X\,|\,i_{d,d}(x)<d\hspace{0.15mm}q\}$ is contained in $\bigvee_{d\in {\mathbb N}} {\cal P}_d$. The system $\{]0,\alpha],|\,\alpha\in [0,1]\}$
generates ${\mathbb B}([0,1])$, and hence the result follows.
\end{proof}

\section{Kolmogorov-Sinai entropy and order structure}

As well known, for a map $f$ and an invariant measure $\mu$ the Kolmogorov-Sinai entropy can be obtained as $\lim_{n\to\infty}h_\mu (f,{\cal A}_n)$
if ${\mathbb B}(X)$ is generated by a sequence of finite partitions ${\cal A}_n$ with corresponding increasing $\sigma$-algebras (see
\cite{Walters82}, Th.~4.22). The following theorem says that under relative mild assumptions this sequence can be chosen in a standard way, namely by
considering the partitions of $X$ with respect to ordinal patterns of order $d$ for increasing $d$.

\begin{theo}\label{important}
Let $X\subset {\mathbb R}$, let $f: X\hookleftarrow$ be a ${\mathbb B}(X)$-${\mathbb B}(X)$-measurable map, and let $\mu$ be an $f$-invariant
probability measure on ${\mathbb B}(X)$. If $\mu$ is ergodic or $X$ is compact and $f$ continuous, then
\begin{eqnarray}
h_\mu (f)=\lim_{d\to\infty} h_\mu (f,{\cal P}_d)=\sup_{d\in {\mathbb N}} h_\mu (f,{\cal P}_d).\label{hmu}
\end{eqnarray}
\end{theo}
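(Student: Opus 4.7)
The plan is to reduce the theorem to Walters' Theorem~4.22 (already cited in the paper), which states that if $\{{\cal A}_n\}$ is an increasing sequence of finite partitions whose $\sigma$-algebras generate $\mathbb{B}(X)$ modulo $\mu$-null sets, then $h_\mu(f) = \lim_n h_\mu(f,{\cal A}_n)$. Since it was already observed in Section~2 that $\sigma({\cal P}_1)\subset\sigma({\cal P}_2)\subset\cdots$, the numbers $h_\mu(f,{\cal P}_d)$ are nondecreasing in $d$, so the second equality in (\ref{hmu}) is automatic. The whole matter reduces to showing that $\bigvee_{d\in\mathbb{N}}{\cal P}_d$ generates $\mathbb{B}(X)$ modulo $\mu$.

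I would handle the ergodic case first. By Lemma \ref{mainlem} there is a set $\widetilde{X}\in\mathbb{B}(X)$ with $\mu(\widetilde{X})=1$ and $f(\widetilde{X})\subset\widetilde{X}$, on which $\bm{\alpha}$ is well-defined, coincides with the distribution function $x\mapsto\mu(\{y\in\widetilde{X}\,|\,y\leq x\})$, and is injective; being also monotonically nondecreasing (cf.\ the proof of Lemma \ref{mainlem}), $\bm{\alpha}$ is in fact strictly increasing on $\widetilde{X}$. By Lemma \ref{suppl}, $\bm{\alpha}$ is $\widetilde{\cal P}$-$\mathbb{B}([0,1])$-measurable, where $\widetilde{\cal P}$ is the restriction of $\bigvee_{d}{\cal P}_d$ to $\widetilde{X}$. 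Strict monotonicity implies that for any $t\in\mathbb{R}$ the left half-line $\widetilde{X}\,\cap\,]-\infty,t]$ equals either $\{x\in\widetilde{X}\,|\,\bm{\alpha}(x)\leq\beta_t\}$ or $\{x\in\widetilde{X}\,|\,\bm{\alpha}(x)<\beta_t\}$ with $\beta_t=\sup\{\bm{\alpha}(x)\,|\,x\in\widetilde{X},\,x\leq t\}$, and hence belongs to $\widetilde{\cal P}$. Since the half-lines generate $\mathbb{B}(\widetilde{X})$, one obtains $\widetilde{\cal P}=\mathbb{B}(\widetilde{X})$, i.e.\ $\bigvee_d{\cal P}_d=\mathbb{B}(X)$ modulo $\mu$, and Walters' theorem applies.

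For the second case, where $X$ is compact and $f$ continuous, I would invoke the ergodic decomposition (see \cite{Walters82}, \S 6.2) to write $\mu = \int \mu_\omega\,dm(\omega)$ as an integral of ergodic $f$-invariant probability measures. Applying the affine decomposition formulas for Kolmogorov-Sinai entropy, one has $h_\mu(f,{\cal P}_d)=\int h_{\mu_\omega}(f,{\cal P}_d)\,dm(\omega)$ and $h_\mu(f)=\int h_{\mu_\omega}(f)\,dm(\omega)$. The ergodic case already proven gives $h_{\mu_\omega}(f)=\sup_d h_{\mu_\omega}(f,{\cal P}_d)$ for $m$-almost every $\omega$, and monotone convergence applied to the nondecreasing sequence $\omega\mapsto h_{\mu_\omega}(f,{\cal P}_d)$ then yields $h_\mu(f)=\sup_d h_\mu(f,{\cal P}_d)$.

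The main obstacle I expect is in the second case: while the reduction via ergodic decomposition is conceptually natural, it rests on the affine decomposition of the Kolmogorov-Sinai entropy across ergodic components, which is a nontrivial companion result to the ergodic decomposition theorem and is not explicitly isolated as a theorem in \cite{Walters82}. In the ergodic case, the delicate point is the step from measurability of $\bm{\alpha}$ to $\widetilde{\cal P}\supset\mathbb{B}(\widetilde{X})$, but once strict monotonicity is exploited this goes through cleanly.
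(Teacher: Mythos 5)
Your proposal is correct and follows essentially the same route as the paper: the ergodic case via Lemmas \ref{mainlem} and \ref{suppl} and an appeal to Theorem 4.22 of \cite{Walters82}, and the general case via ergodic decomposition plus monotone convergence. The affine decomposition formulas whose provenance you worry about are exactly Jacobs' theorem, which the paper cites as Theorem 8.4 of \cite{Walters82}.
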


\begin{proof}
We start with the case that $\mu$ is ergodic. Clearly, $\bigvee_{d\in {\mathbb N}} {\cal P}_d$ is contained in ${\mathbb B}(X)$ (see end of Section
\ref{intro}). Let $\widetilde{X}$ be chosen as for the ergodic case in Lemma \ref{mainlem} and let ${\bm \alpha}$ be considered as a map on
$\widetilde{X}$. Then ${\bm \alpha}$ is injective, monotone and, by Lemma \ref{suppl}, $\widetilde{\cal P}$-${\mathbb B}([0,1])$-measurable.
Therefore, one easily sees that $\widetilde{\cal P}$ coincides with the restriction of ${\mathbb B}(X)$ to $\widetilde{X}$. Hence for each $B\in
{\mathbb B}(X)$ there exists an $A\in \bigvee_{d\in {\mathbb N}} {\cal P}_d$ with $\mu (A\,\Delta\,B)=0$, and (\ref{hmu}) follows from Theorem 4.22
in \cite{Walters82}.

In the case that $\mu$ is non-ergodic use ergodic decomposition (see, e.g.,~\cite{Walters82}, \S 6.2.) and the Monotone Convergence Theorem (see
e.g.~\cite{Walters82}, Theorem 0.8): $\mu$ can be written as $\mu=\int_Em\, d\tau (m)$, where $E$ denotes the set of ergodic $f$-invariant Borel
probability measures and $\tau$ is a Borel measure on the $f$-invariant Borel probability measures on $X$ (with respect to the weak$^\ast$ topology)
with $\tau (E)=1$. According to results shown by Jacobs (compare \cite{Walters82}, Theorem 8.4), it holds
\begin{eqnarray}
\hspace{-1cm}h_{\mu}(f,{\cal A})=\int_E h_m(f,{\cal A})\, d\tau (m)\mbox{ for each finite partition }{\cal A}\subset {\mathbb B}(X)\mbox{ of }X,\\
\hspace{-1cm}h_{\mu}(f)=\int_E h_m(f)\, d\tau (m).
\end{eqnarray}
By the already shown, the sequence $(g_d)_{d\in {\mathbb N}}$ of functions $g_d$ on the $f$-invariant Borel probability measures defined by
$g_d(m)=h_m(f,{\cal P}_d)$ on $E$ and vanishing outside of $E$ is monotonically increasing and converges pointwisely to $h_m(f)$ on $E$. By the
Monotone Convergence Theorem, either $(h_{\mu}(f,{\cal P}_d))_{d\in {\mathbb N}}$ is bounded and it holds (\ref{hmu}), or $(h_{\mu}(f,{\cal
P}_d))_{d\in {\mathbb N}}$ is unbounded and $h_m(f)$ is infinite on a set of positive measure with respect to $\tau$. In the latter case
$h_{\mu}(f)=\infty$.\end{proof}

\section{Permutation entropy}

As a relatively simple consequence of the above theorem, we obtain that the Kolmogo\-rov-Sinai entropy is not larger than the permutation entropy.
This generalizes a result given by Amig\'{o} et al.~\cite{Amigo05} for ergodic maps $f$ on intervals with finite $h_\mu (f)$.

\begin{cor}\label{ineq}
Let $X\subset {\mathbb R}$, let $f: X\hookleftarrow$ be a ${\mathbb B}(X)$-${\mathbb B}(X)$-measurable map, and let $\mu$ be an $f$-invariant
probability measure on ${\mathbb B}(X)$. If $\mu$ is ergodic or $X$ is compact and $f$ is continuous, then $h_\mu (f)\leq h^\ast_\mu (f)$.
\end{cor}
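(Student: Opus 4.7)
The plan is to combine Theorem \ref{important} with a refinement comparison between the $n$-fold join of the order-$d$ pattern partition and the order-$(d+n-1)$ pattern partition. By Theorem \ref{important}, under either of the two hypotheses we have $h_\mu(f) = \sup_{d \in \mathbb{N}} h_\mu(f, \mathcal{P}_d)$, so it is enough to show that $h_\mu(f, \mathcal{P}_d) \leq h^*_\mu(f)$ for every fixed $d$.

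The key combinatorial observation is that knowing the ordinal pattern of order $d+n-1$ at a point $x$ amounts to knowing the entire order type of $(x, f(x), \ldots, f^{\circ(d+n-1)}(x))$, which in particular determines the order type of each consecutive window $(f^{\circ j}(x), f^{\circ(j+1)}(x), \ldots, f^{\circ(j+d)}(x))$ for $j = 0, 1, \ldots, n-1$. Consequently, the partition $\mathcal{P}_{d+n-1}$ refines $\bigvee_{j=0}^{n-1} f^{-\circ j}(\mathcal{P}_d)$, so that
\begin{eqnarray*}
H\!\left(\bigvee_{j=0}^{n-1} f^{-\circ j}(\mathcal{P}_d)\right) \leq H(\mathcal{P}_{d+n-1}).
\end{eqnarray*}
Dividing by $n$ and using that $h_\mu(f, \mathcal{P}_d)$ is actually a limit, one obtains
\begin{eqnarray*}
h_\mu(f, \mathcal{P}_d) \;\leq\; \liminf_{n \to \infty} \frac{1}{n} H(\mathcal{P}_{d+n-1}) \;\leq\; \limsup_{n \to \infty} \frac{d+n-1}{n} \cdot \frac{1}{d+n-1} H(\mathcal{P}_{d+n-1}) \;=\; h^*_\mu(f),
\end{eqnarray*}
since the prefactor tends to $1$. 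Taking the supremum over $d$ and invoking Theorem \ref{important} finishes the argument.

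The only nontrivial step is the refinement claim; once stated it is essentially a tautology from Definition \ref{odef}, because the inversion numbers $i_{d+n-1, l}(x)$ for all $l$ encode all pairwise comparisons among $\{x, f(x), \ldots, f^{\circ(d+n-1)}(x)\}$, which include all the comparisons needed to compute $\mathbf{i}_d(f^{\circ j}(x))$ for each $j \leq n-1$. No ergodicity or compactness assumption enters here beyond what is already used by Theorem \ref{important}, so the corollary follows under either hypothesis of that theorem.
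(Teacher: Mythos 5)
Your proof is correct and takes essentially the same route as the paper: both rest on Theorem \ref{important} together with the observation that the order-$(d+n-1)$ pattern partition (the paper uses $\mathcal{P}_{d+n}$) refines $\bigvee_{j=0}^{n-1} f^{-\circ j}(\mathcal{P}_d)$, justified exactly as you justify it, via the fact that higher-order inversion numbers encode all the pairwise comparisons needed for the shifted lower-order patterns. Your $\limsup$ bookkeeping with the prefactor $\frac{d+n-1}{n}\to 1$ is just a slightly cleaner packaging of the paper's explicit estimates with constants $a$, $c$ and $n\geq\max\{n(d),\frac{d}{c-1}\}$.
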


\begin{proof}
If $h_\mu (f)=0$, we are done. Otherwise, consider some $a>0$ with $h_\mu (f)>a$. Given some $c>1$ with $h_\mu (f)>c\,a$, by the above theorem there
exists a $d_0\in {\mathbb N}$ with the following property: For each $d\geq d_0$ there exist some $n(d)$ with $ \frac{1}{n}H(\bigvee_{j=0}^{n-1}
f^{-\circ j}({\cal P}_d))>c\,a$ for all $n\geq n(d)$.

Let $d\geq d_0$ and $n\geq \max \{n(d),\frac{d}{c-1}\}$. Note that the $\sigma$-algebra generated by ${\cal P}_{d+n}$ contains\linebreak
$\bigvee_{j=0}^{n-1} f^{-\circ j}({\cal P}_d)$. This can be seen from the fact that for each ${\bf i}=(i_1,i_2,\ldots,i_d)\in {\cal I}_d$ and
$j\in\{0,1,\ldots ,n-1\}$ one has
\begin{eqnarray*}
f^{-\circ j}(P_{\bf i})=\bigcap_{l=1}^d \{x\in X\,|\, i_{d,\,l}(f^{\circ j}(x))=i_l\}=\bigcap_{l=1}^d \{x\in X\,|\, i_{d+j,\,l}(x)=i_l\}.
\end{eqnarray*}
Therefore, one easily sees that $f^{-\circ j}(P_{\bf i})$ is contained in the $\sigma$-algebra generated by sets $\{x\in X\,|\,f^{\circ k}(x)\leq
f^{\circ l}(x)\}$ with $k,l\in\{0,1,\ldots,d+n\}$ and $l<k$ (compare to (\ref{defform})). Since validity of $f^{\circ k}(x)\leq f^{\circ l}(x)$ only
depends on the ordinal pattern of $x$ (see below definition of ordinal patterns), such sets are the union of sets in ${\cal P}_{d+n}$, hence the set
$f^{-\circ j}(P_{\bf i})$ belongs to the $\sigma$-algebra generated by ${\cal P}_{d+n}$.

Consequently, $H({\cal P}_{d+n})\geq H(\bigvee_{j=0}^{n-1} f^{-\circ j}({\cal P}_d))$, which implies
\begin{eqnarray*}
\frac{1}{d+n}\,H({\cal P}_{d+n})&\geq&\frac{1}{d+n}\,H(\bigvee_{j=0}^{n-1} f^{-\circ j}({\cal P}_d))\\&\geq&
\frac{1}{(c-1)\,n+n}\,H(\bigvee_{j=0}^{n-1} f^{-\circ
j}({\cal P}_d))\\
&=& \frac{1}{cn}\,H(\bigvee_{j=0}^{n-1} f^{-\circ
j}({\cal P}_d))\\
&>&a.
\end{eqnarray*}
From this one obtains $h^\ast_\mu (f)\geq a$ and, since $a$ can be chosen arbitrarily near to $h_\mu (f)$, it follows $h^\ast_\mu (f)\geq h_\mu (f)$.
\end{proof}
The central question under which assumptions Kolmogorov-Sinai entropy and permutation entropy coincide remains open in the general case, whereas for
piecewise monotone interval maps with invariant probability measure coincidence has been shown by Bandt et al.~\cite{bkp}, as already mentioned in
the Introduction. A map $f$ from an interval $[a,b]$ is said to be piecewise monotone if there are points $c_0,c_1,\ldots , c_k$ with
$a=c_0<c_1<\ldots <c_k=b$ such that $f$ is continuous and strictly monotone on $[c_{j-1},c_j]$ for each $j=1,2,\ldots ,k$. (Continuity of the whole
map is not required, in contrast to the often given definition of a piecewise monotone interval map.) Note that the proofs in \cite{bkp} do not
provide an idea for a generalization since they essentially base on piecewise monotonicity, in particular, on a result given by Misiurewicz and
Szlenk \cite{Misiurewicz80}.

As mentioned in the Introduction, Amig\'{o} et al.~\cite{Amigo05} have given a modified concept of permutation entropy of an interval map, for which
they have shown equality to the Kolmogorov-Sinai entropy. Their idea is to consider permutation entropy for a stationary stochastic process with
values in a finite set equipped with an arbitrary order. (The resulting entropy does not depend on the order.) Given a partition of the interval on
which the map is defined into finitely many subintervals, one only regards in which of the subintervals a point lies. This yields the desired
stochastic process, and the permutation entropy of the particular partition is naturally defined. The permutation entropy of the interval map is then
defined as the limit of permutation entropies of partitions with interval lengths tending to $0$.

Just like the Kolmogorov-Sinai entropy, this modified concept of permutation entropy uses infinitely many partitions which in contrast to the
original definition of permutation entropy are non-standardized. Note that the result of Amig\'{o} et al.~\cite{Amigo05} is given not only for
interval maps but in a multidimensional context.

\end{document}